\documentclass[conference]{IEEEtran}

\IEEEoverridecommandlockouts
\usepackage[letterpaper, left=0.65in, right=0.65in, bottom=1.02in, top=0.75in]{geometry}
\usepackage{amsmath,amssymb,amsfonts}
\usepackage{amsthm}
\usepackage{cite}
\usepackage{graphicx}
\usepackage{xcolor}
\usepackage{tikz}
\usetikzlibrary{calc,decorations.pathreplacing,arrows.meta,positioning}

\newtheorem{theorem}{Theorem}
\newtheorem{corollary}{Corollary}
\theoremstyle{remark}

\begin{document}

\title{Switched-Feed Pinching-Antenna Systems for Wideband Terahertz Communications\\
\thanks{This work was supported by the German Federal Ministry of Education and Research (BMBF) through \emph{Open6GHub+} (Grant no.  \emph{16KIS2402K}) project.}
}

\author{
\IEEEauthorblockN{Wei Jiang}
\IEEEauthorblockA{Intelligent Networking Research Group\\
German Research Center for Artificial Intelligence (DFKI)\\
Kaiserslautern, 67663 Germany}
\and
\IEEEauthorblockN{Hans D. Schotten}
\IEEEauthorblockA{Department of Electrical and Computer Engineering\\
University of Kaiserslautern (RPTU)\\
Kaiserslautern, 67663 Germany}
}

\maketitle
\vspace{-2.5em}
\setlength{\textfloatsep}{6pt plus 2pt minus 2pt}
\setlength{\intextsep}{6pt plus 2pt minus 2pt}
\setlength{\abovecaptionskip}{4pt}
\setlength{\belowcaptionskip}{2pt}

\begin{abstract}
The pinching-antenna system (PASS) uses dielectric particles along a low-loss waveguide as reconfigurable passive radiators. Existing analyses conclude that the in-waveguide attenuation is negligible at low frequencies and millimeter wave bands; we show this fails at terahertz (THz), where realizable waveguide losses are dramatically larger. We develop a unified wideband THz-PASS propagation model integrating in-waveguide attenuation, atmospheric absorption, molecular re-radiation noise, and beam squint. Closed-form results follow: a band-averaged coherence factor; a cluster-center placement satisfying a band-edge SINR equalization condition; an associated placement-inversion threshold; and a proposed \emph{Switched-Feed PASS} (SF-PASS) architecture in which a centrally located radio-frequency switch routes the signal among multiple waveguide segments, with a closed-form insertion-loss payoff threshold. Numerical evaluation at the best PASS-compatible THz operating point shows that SF-PASS substantially outperforms single-feed PASS in spectral efficiency and is competitive with a large-scale antenna array at much lower hardware costs.
\end{abstract}

\begin{IEEEkeywords}
Pinching-antenna systems, terahertz communications, in-waveguide loss,
beam squint, molecular absorption.
\end{IEEEkeywords}

\section{Introduction}\label{sec:introduction}
\IEEEPARstart{T}{erahertz} (THz) frequencies (generally covering 0.1--3\,THz) are
regarded as a key enabler of next-generation wireless systems, offering
unprecedented bandwidths for terabit-per-second communication links~\cite{Ref_jiang2024thz6g},
at the cost of severe free-space spreading loss, line-of-sight blockage, and
frequency-resonant atmospheric absorption~\cite{Ref_jornet2011channel}. Conventional
solutions deploy large-scale arrays of hundreds of active antennas to produce high-gain beams,
with their attendant per-element radio frequency (RF)-chain or phase-shifter hardware cost.

The \emph{pinching-antenna system} (PASS), prototyped by NTT
DOCOMO~\cite{Ref_fukuda2022pinching} and developed as a wireless
architecture in~\cite{Ref_ding2025flexible,Ref_liu2026passtutorial,Ref_wang2025modeling},
offers a fundamentally different approach: a single RF chain feeds a long, low-loss
dielectric waveguide (WG) along which small dielectric ``pinches'' radiate, with pinch
positions reconfigurable at deployment to minimize the user-to-radiator distance.
In prior works, in-waveguide attenuation $\alpha_{\rm g}$ is often neglected
and a recent analysis~\cite{Ref_xu2025inwgblockage} concluded that in-waveguide
attenuation is negligible even under millimeter wave (mmWave) frequencies
(e.g., $\alpha_{\rm g}\!=\!0.0092$~Np/m, $\approx\!0.08$~dB/m at 28~GHz). However, realizable
THz dielectric-waveguide structures exhibit $\alpha_{\rm g}$ values that are
substantially larger---ranging from $0.43$~dB/m for porous-core polyethylene
fibers at 300~GHz to $87$~dB/m for microstructured high-density polyethylene at
1~THz~\cite{Ref_atakaramians2013thz}---invalidating the negligible-loss assumption at low and mmWave frequencies. In addition, three THz-unique effects also lack treatment in PASS literature:
frequency-resonant atmospheric absorption per
ITU-R~P.676~\cite{Ref_itur_p676_13}, molecular re-radiation
noise~\cite{Ref_jornet2011channel, Ref_tarboush2021teramimo}, and beam squint
across clustered pinches---qualitatively distinct from the array-aperture
squint of fixed arrays.

This paper develops a unified wideband THz-PASS framework with five
contributions: (i)~a unified wideband channel model integrating in-WG attenuation and the three THz-unique propagation effects mentioned above (Section~\ref{sec:channel_model});
(ii)~a closed-form, band-averaged coherence factor that quantifies the
beam-squint penalty of a clustered multi-pinch aperture
(Section~\ref{sec:beam_squint});
(iii)~a closed-form cluster-center placement rule based on band-edge SINR
equalization, together with a closed-form \emph{placement-inversion
threshold} above which the optimal cluster-center migrates from the user
projection toward the waveguide feed (Section~\ref{sec:optimal_placement});
(iv)~a structurally distinct \emph{Switched-Feed PASS} (SF-PASS) architecture
that employs a centrally located RF switch to route the
signal among multiple waveguide segments, together with a closed-form
insertion-loss payoff threshold that establishes when SF-PASS dominates
the single-feed counterpart (Section~\ref{sec:sfpass}); and
(v)~a numerical study in an indoor corridor at a representative
PASS-compatible THz operating point, showing that SF-PASS substantially
outperforms single-feed PASS in spectral efficiency (SE) and is competitive
with a uniform planar array (UPA) at \emph{much lower}
hardware costs (Section~\ref{sec:numerical}).

\section{System Model}\label{sec:system_model}

\begin{figure}[t]
\centering
\resizebox{\columnwidth}{!}{%
\begin{tikzpicture}[
  every node/.style={font=\footnotesize},
  axis/.style={->,>={Stealth[length=7pt,width=5pt]},very thick,black},
  pa/.style={circle,fill=blue!60!black,inner sep=0pt,minimum size=3pt},
  molecule/.style={circle,draw=orange!70!black,fill=orange!15,inner sep=0pt,minimum size=8pt,font=\tiny}
]
% Cabinet projection: y'-axis at 30 deg, foreshortening 0.5
\def\ya{cos(30)*0.5}
\def\yb{sin(30)*0.5}
\pgfmathsetmacro{\yax}{\ya}
\pgfmathsetmacro{\yay}{\yb}
\def\yspan{1.8}
\def\dh{2.5}
\def\xf{3.5}      % switch / feed position at L/2
\def\xc{4.7}      % PA cluster center (on active right segment)
\def\xu{5.5}
\def\yu{1.5}

% Coordinate axes (z extended up to give ITU inset more headroom)
\draw[axis] (0,0) -- (7.8,0) node[right,black] {$x$};
\draw[axis] (0,0) -- ({1.4*\yspan*\yax},{1.4*\yspan*\yay}) node[above right,black] {$y$};
\draw[axis] (0,0) -- (0,4.7) node[above,black] {$z$};
\node[below left,font=\scriptsize] at (0,0) {$O$};

% Ground plane (light gray parallelogram); label placed away from axes
\fill[black!5] (0,0) -- (7.2,0) --
  ({7.2+\yspan*\yax},{\yspan*\yay}) -- ({\yspan*\yax},{\yspan*\yay}) -- cycle;
\node[black!50,font=\scriptsize] at (3.2,0.20) {ground plane $z\!=\!0$};

% LEFT waveguide segment (idle, lighter solid blue — same shape as right, just lighter)
\draw[line width=2pt,blue!22!white!95!black] (0.2,\dh) -- ({\xf-0.18},\dh);
\node[blue!50!gray,font=\scriptsize] at (1.55,{\dh-0.25}) {idle segment $\mathcal{S}_1$};

% RIGHT waveguide segment (active, solid full blue); label below WG near right end, aligned with idle's S_1 label
\draw[line width=2.5pt,blue!55!black] ({\xf+0.18},\dh) -- (7.0,\dh);
\node[blue!55!black,font=\scriptsize] at (6.30,{\dh-0.25}) {active segment $\mathcal{S}_2$};

% RF switch at center (x_f = L/2) — small white box with SW label
\fill[white] ({\xf-0.18},{\dh-0.13}) rectangle ({\xf+0.18},{\dh+0.13});
\draw[blue!55!black,line width=0.8pt] ({\xf-0.18},{\dh-0.13}) rectangle ({\xf+0.18},{\dh+0.13});
\node[blue!55!black,font=\tiny] at (\xf,\dh) {SW};

% BS / RF chain box ABOVE switch, connected by vertical cable — lighter fill, larger font
\fill[blue!18!white] ({\xf-0.48},{\dh+0.82}) rectangle ({\xf+0.48},{\dh+1.32});
\draw[blue!55!black,line width=0.6pt] ({\xf-0.48},{\dh+0.82}) rectangle ({\xf+0.48},{\dh+1.32});
\node[blue!40!black,font=\scriptsize] at (\xf,{\dh+1.07}) {BS+RF};
\draw[blue!55!black,line width=1.2pt] (\xf,{\dh+0.82}) -- (\xf,{\dh+0.13});

% Switch position indicator on x-axis
\draw[dashed,black!50] (\xf,\dh-0.15) -- (\xf,0);
\node[below=2pt,font=\scriptsize] at (\xf,0) {$x_f\!=\!L/2$};

% Height indicator d
\draw[<->,thin,black!60] (-0.25,0.05) -- (-0.25,\dh-0.05) node[midway,left,font=\scriptsize] {$d$};

% In-waveguide attenuation on active segment (fading arrows from switch toward cluster)
\foreach \i [evaluate=\i as \op using {0.95-0.22*\i}] in {0,1,2,3} {
  \draw[->,green!50!black,opacity=\op,line width=1pt]
    ({3.80+0.13*\i},\dh+0.18) -- ({3.93+0.13*\i},\dh+0.18);
}
\node[green!45!black,above,font=\scriptsize] at (3.95,\dh+0.32) {$\alpha_{\rm g}(f)$};

% PA cluster at x_c
\foreach \i in {-2,-1,0,1,2} {
  \node[pa] at ({\xc+0.16*\i},\dh) {};
}
% Brace ABOVE cluster (to keep label area below clear for LoS + H2O)
\draw[decorate,decoration={brace,amplitude=3pt,raise=3pt}]
  ({\xc-0.36},\dh) -- ({\xc+0.36},\dh);
% N_act PAs label placed clearly RIGHT of cluster, well clear of alpha_g(f) and LoS
\node[font=\scriptsize,anchor=west,black!85] at ({\xc+0.55},{\dh+0.32}) {$N_{\rm act}$ PAs};
\draw[->,thin,black!55,>={Stealth[length=3pt]}]
  ({\xc+0.50},{\dh+0.30}) -- ({\xc+0.10},{\dh+0.18});

% x_c indicator (dashed drop to x-axis)
\draw[dashed,black!50] (\xc,\dh) -- (\xc,0);
\node[below=2pt,font=\scriptsize] at (\xc,0) {$x_c$};

% User position at (x_u, y_u, 0)
\coordinate (user) at ({\xu+\yu*\yax},{\yu*\yay});
\fill[red!75!black] (user) circle (3pt);
\node[red!75!black,below right=0pt and -3pt,font=\scriptsize] at (user) {UE $(x_u, y_u, 0)$};
% Drop lines x_u, y_u
\draw[dashed,black!50] (\xu,0) -- (user);
\draw[dashed,black!50] ({\yu*\yax},{\yu*\yay}) -- (user);
\node[below,font=\scriptsize] at (\xu,-0.02) {$x_u$};
\node[left,font=\scriptsize] at ({\yu*\yax-0.05},{\yu*\yay+0.05}) {$y_u$};

% LoS link from cluster to user. Compact "r_c" label placed below the line at fraction 0.30, clear of both the active-segment label (above the WG) and the H2O molecules (further down the line). The caption identifies the line as the LoS link.
\draw[red!75!black,line width=0.9pt] (\xc,\dh) -- (user);
\node[red!75!black,font=\scriptsize,sloped,below,yshift=-1pt]
  at ($(\xc,\dh)!0.10!(user)$) {LoS link $r_c$};

% Atmospheric molecules along LoS (concentrated toward UE end to leave the upper portion of the line free for the r_c label)
\foreach \pos in {0.55, 0.72, 0.88} {
  \node[molecule] at ($(\xc,\dh)!\pos!(user)$) {\tiny H$_2$O};
}

% Re-radiation noise arrows from molecules
\foreach \pos in {0.55, 0.72, 0.88} {
  \draw[->,orange!75!black,dashed,thin,>={Stealth[length=3pt]}]
    ($(\xc,\dh)!\pos!(user)+(0.18,-0.10)$) -- ($(\xc,\dh)!\pos!(user)+(0.46,-0.28)$);
}
\node[orange!75!black,font=\scriptsize,align=center] at (7.0,1.05)
  {mol.\ noise\\$N_{\rm mol}(f,r_c)$};

% Inset: ITU-R P.676 atmospheric attenuation curve
% Placed on the LEFT, on top of the idle segment, where the white space exists.
% Bands: 0.22 / 0.30 / 1.0 THz — VTC primary scenario; axis unit f [THz] (matches Table I).
\begin{scope}[shift={(0.20,3.40)},scale=0.70,every node/.style={font=\tiny}]
  % Inset bounding box (light frame)
  \draw[gray!40, line width=0.3pt] (-0.10,-0.30) rectangle (2.95,1.60);
  % ITU-R title (above curve, fully inside box)
  \node[above,font=\tiny,black!75] at (1.40,1.30) {ITU-R P.676};
  \draw[black!70,->,>={Stealth[length=2.5pt]}] (0,0) -- (2.75,0) node[right,xshift=-3pt,font=\tiny,black!85] {$f$\,[THz]};
  \draw[black!70,->,>={Stealth[length=2.5pt]}] (0,0) -- (0,1.30) node[right,xshift=2pt,yshift=-3pt,black!85] {$\alpha_{\rm atm}$};
  % Schematic ITU-R P.676-like curve (water-vapor + oxygen resonances above 100 GHz)
  \draw[blue!60!black,thick] plot[smooth,tension=0.6]
    coordinates {(0.0,0.05) (0.3,0.08) (0.5,0.13) (0.65,0.45) (0.78,0.15)
                 (0.95,0.18) (1.10,0.70) (1.22,0.25) (1.45,0.35) (1.60,0.85)
                 (1.78,0.45) (2.00,0.65) (2.18,0.95) (2.40,0.55)};
  % Mark the three operating windows (in THz: 0.22 / 0.30 / 1.0)
  \fill[red!75!black] (0.42,0) circle (1.6pt);
  \fill[red!75!black] (0.85,0) circle (1.6pt);
  \fill[red!75!black] (2.10,0) circle (1.6pt);
  % Stagger labels so 220 and 300 do not overlap
  \node[below=1pt,font=\tiny] at (0.32,-0.02) {0.22};
  \node[below=1pt,font=\tiny] at (0.95,-0.02) {0.30};
  \node[below=1pt,font=\tiny] at (2.10,-0.02) {1.0};
\end{scope}

\end{tikzpicture}%
}
\caption{System model of the proposed SF-PASS,
$K\!=\!2$ corridor. A centrally located RF switch at $x_f\!=\!L/2$ routes
the BS signal into the active waveguide segment $\mathcal{S}_2$ containing
the user (right, solid blue); the idle segment $\mathcal{S}_1$ (left,
lighter blue) is shown for context. %A cluster of $N_{\rm act}$ pinching antennas at position $x_c$ on the active segment serves a ground UE at $(x_u, y_u, 0)$. The four THz-unique physical effects integrated in the channel model are visualized: frequency-dependent in-WG attenuation $\alpha_{\rm g}(f)$ along the active segment (green), free-space LoS propagation with frequency-resonant atmospheric absorption and molecular re-radiation noise $N_{\rm mol}(f, r_c)$ from H$_2$O/O$_2$ along the link (orange), and beam squint across the cluster aperture. Inset: the ITU-R P.676 atmospheric attenuation spectrum, with the three operating windows used in Section~\ref{sec:numerical}.
}
\label{fig:system_model}
\end{figure}

Consider a downlink transmission from a base station (BS) to a single user
equipment (UE) located at $\boldsymbol{\psi}_{\rm u}\!\triangleq\!(x_{\rm u},
y_{\rm u}, 0)$ on the ground plane, as illustrated in
Fig.~\ref{fig:system_model}. The BS feeds a dielectric waveguide of total
length $L$ deployed horizontally at height $d$ along the $x$-axis, with
feed point at $x\!=\!x_f$ (taken as $x_f\!=\!0$ in the canonical single-feed
configuration and $x_f\!=\!L/2$ in the proposed SF-PASS
extension). A cluster of
$N_{\rm act}$ pinching antennas (PAs) is activated, with the $n$-th PA
($n\!=\!1,\dots,N_{\rm act}$) located at
\begin{equation}
\boldsymbol{\psi}_n^{\rm PA} \triangleq (x_{\rm c}\!+\!\delta_n,\, 0,\, d),\;
\delta_n \triangleq \left(n - \tfrac{N_{\rm act}+1}{2}\right)\lambda_{\rm g}(f_c)/2,
\label{eq:pa_positions}
\end{equation}
where $\lambda_{\rm g}(f_c)\!\triangleq\!c/(n_{\rm eff}(f_c) f_c)$ is the
guided wavelength, $n_{\rm eff}(f)$ is the effective (group) refractive
index of the dielectric waveguide (typically $n_{\rm eff}\!\approx\!1.4$
for THz porous-core
fibers~\cite{Ref_atakaramians2013thz,Ref_wang2025modeling}), and the
symmetric indexing makes $x_{\rm c}$ the cluster centroid. The
half-guided-wavelength spacing enables phase-coherent combining at
$f_c$~\cite{Ref_ding2025flexible,Ref_liu2026passtutorial}; as in the
canonical PASS model, each pinch is treated as a point-like dielectric
perturbation whose physical extent is sub-$\lambda_{\rm g}/2$, which at
sub-mm THz scale dictates sub-millimeter pinch fabrication---an active
THz dielectric-device design constraint~\cite{Ref_atakaramians2013thz}
but not a fundamental limitation. The cluster aperture
$(N_{\rm act}\!-\!1)\lambda_{\rm g}(f_c)/2$ is millimeter-scale at
THz---negligible relative to $L$, $1/\alpha_{\rm g}(f_c)$, and the
cluster-to-UE distance
$r_{\rm c}\!\triangleq\!\sqrt{(x_{\rm c}\!-\!x_{\rm u})^2\!+\!y_{\rm u}^2\!+\!d^2}$---so
all PAs share approximately the same large-scale path loss and
atmospheric attenuation, differing only in in-WG phases across
the OFDM band. To keep the entire cluster on the waveguide, the
cluster centroid is constrained to $x_{\rm c}\!\in\![\Delta_{\rm c},\,
L\!-\!\Delta_{\rm c}]$ with cluster half-aperture
$\Delta_{\rm c}\!\triangleq\!(N_{\rm act}\!-\!1)\lambda_{\rm g}(f_c)/4$,
i.e., a guard of $\Delta_{\rm c}$ at each waveguide end. The transmitted
waveform is OFDM with total bandwidth $B$ centered at $f_c$ and $M$
subcarriers indexed by $m\!\in\!\{1,\dots,M\}$ with frequencies \cite{Ref_xiao2025ofdmpass}
\begin{equation}
f_m \triangleq f_c + \left(m - \tfrac{M+1}{2}\right)\Delta f,\qquad
\Delta f \triangleq B/M.
\label{eq:subcarriers}
\end{equation}

\setcounter{equation}{6}
\begin{figure*}[!t]
\normalsize
\begin{equation}
y_m = \sqrt{\frac{G_{\rm t}\,G_{\rm r}\,P_{\rm t}(f_m)}{N_{\rm act}}}\,
\underbrace{\frac{c}{4\pi f_m r_{\rm c}}}_{\text{free-space spreading}}\,
\underbrace{e^{-\alpha_{\rm g}(f_m)\, x_{\rm c}}}_{\text{(i) in-WG loss}}\,
\underbrace{e^{-\frac{1}{2}\alpha_{\rm atm}(f_m)\, r_{\rm c}}}_{\text{(ii) atm.\ absorp.}}\,
\underbrace{\Bigg(\sum_{n=1}^{N_{\rm act}} e^{-j\beta(f_m)(x_{\rm c}+\delta_n)}\Bigg)}_{\text{(iv) beam-squint phasor}}\,
e^{-j\frac{2\pi f_m r_{\rm c}}{c}}\, s_m
\;+\; \underbrace{w_m}_{\text{(iii) thermal + mol.\ noise}}
\label{eq:received_signal_master}
\end{equation}
\hrulefill
\vspace*{4pt}
\end{figure*}
\setcounter{equation}{2}
\section{Wideband THz-PASS Propagation and Signal Modeling}\label{sec:channel_model}

This section develops a unified per-subcarrier channel and signal model
that integrates the four THz-unique physical effects: frequency-dependent
in-WG attenuation, atmospheric absorption, molecular re-radiation
noise, and beam squint across the PA cluster. We first build up the
physical components, and conclude with a single unified per-subcarrier
expression in which all four effects are marked
in-line.

The guided wave propagates along the waveguide with frequency-dependent
complex propagation constant
\begin{equation}
\gamma(f) \,\triangleq\, \alpha_{\rm g}(f) + j\,\beta(f),\quad
\beta(f) \,\triangleq\, \frac{2\pi\,n_{\rm eff}(f)\,f}{c},
\label{eq:gamma_wg}
\end{equation}
where $\alpha_{\rm g}(f)$ is the field-amplitude decay coefficient
($e^{-\alpha_{\rm g} x}$ field, $e^{-2\alpha_{\rm g} x}$ power), with
formulas using Np/m and numerical values reported in dB/m via
$\alpha[\mathrm{dB/m}]\!=\!8.686\,\alpha[\mathrm{Np/m}]$ (same convention
for atmospheric absorption $\alpha_{\rm atm}$). Realizable THz dielectric-waveguide
topologies~\cite{Ref_atakaramians2013thz} span the magnitude orders from $10^{-1}$ to $10^{2}$~dB/m,
contrasting sharply with the $\alpha_{\rm g}\!\approx\!0.08$~dB/m commonly
assumed at mmWave~\cite{Ref_xu2025inwgblockage}. Each pinch acts as a
passive radiator that couples a fraction of the guided power into free space
via evanescent leakage. Adopting the standard \emph{equal-power
model}~\cite{Ref_wang2025modeling,Ref_ding2025flexible,Ref_liu2026passtutorial},
the per-PA coupling coefficient is tuned to
$\kappa_n^2(f_m)\!\triangleq\!1/(N_{\rm act}\!-\!n\!+\!1)$, so that all
activated PAs in the cluster radiate equal power
\begin{equation}
P_n(f_m) = \frac{P_{\rm t}(f_m)}{N_{\rm act}}\,e^{-2\alpha_{\rm g}(f_m)(x_{\rm c}+\delta_n)} \approx \frac{P_{\rm t}(f_m)}{N_{\rm act}}\,e^{-2\alpha_{\rm g}(f_m)\,x_{\rm c}},
\label{eq:equal_power}
\end{equation}
where $P_{\rm t}(f_m)\!\triangleq\!P_{\rm t}/M$ is the per-subcarrier
input power under a sum-power constraint, and the small-cluster
approximation $\alpha_{\rm g}(f_m)|\delta_n|\!\ll\!1$ has been applied
in the second equality. The total radiated power
$\sum_{n}\!P_n(f_m)\!\approx\!P_{\rm t}(f_m)\,e^{-2\alpha_{\rm
g}(f_m) x_{\rm c}}$ is reduced from the input by the
$2\alpha_{\rm g} x_{\rm c}$~[Np] in-WG loss. The alternative proportional-power
model~\cite{Ref_wang2025modeling}, which uses a single shared $\kappa^2$
and yields a geometrically decaying per-PA power profile, gives
nearly-identical aggregate spectral efficiency in our operating
regime~\cite[Fig.~8]{Ref_wang2025modeling} but is less amenable to the
closed-form derivations of Sections~\ref{sec:beam_squint}
and~\ref{sec:optimal_placement}.

The composite per-PA channel coefficient $h_{n,m}(x_{\rm c})$ reads
\begin{equation}
h_{n,m}(x_{\rm c}) = \frac{c\,e^{-\frac{1}{2}\alpha_{\rm atm}(f_m) r_{\rm c}}}{4\pi f_m r_{\rm c}}\,
e^{-j[\beta(f_m)(x_{\rm c}+\delta_n) + 2\pi f_m r_{\rm c}/c]},
\label{eq:free_space}
\end{equation}
where the small-cluster approximation
$\|\boldsymbol{\psi}_n^{\rm PA}\!-\!\boldsymbol{\psi}_{\rm u}\|\!\approx\!r_{\rm c}$
has been applied to the magnitude terms. The specific atmospheric attenuation
$\alpha_{\rm atm}(f)$ is evaluated by line-by-line summation
of oxygen and water-vapor resonances under the standard atmosphere
$(T\!=\!15^{\circ}{\rm C}, P\!=\!1013.25~\mathrm{hPa},
\rho\!=\!7.5~\mathrm{g/m^3})$ following~\cite{Ref_itur_p676_13}.

Finally, the additive receiver noise comprises thermal noise and
molecular re-radiation noise~\cite{Ref_jornet2011channel}: atmospheric
molecules absorbing THz photons re-emit thermal radiation in turn,
contributing a distance- and frequency-dependent noise floor
\begin{equation}
N_{\rm total}(f_m, r_{\rm c}) \,=\, k_B T_{\rm sys}
\,+\, k_B T_0\Big(1 - e^{-K(f_m)\,r_{\rm c}}\Big),
\label{eq:noise}
\end{equation}
where $K(f)\!\triangleq\!2\alpha_{\rm atm}(f)$ is the power-domain molecular
absorption coefficient, $k_B$ is Boltzmann's constant, $T_{\rm sys}$ the
receiver system temperature, and $T_0\!=\!296.0$~K the reference
atmospheric temperature. The molecular term
$k_B T_0(1\!-\!e^{-K(f_m) r_{\rm c}})$ asymptotes to $k_B T_0$ as either
$K(f_m)$ or $r_{\rm c}$ grows, capturing the blackbody-like noise
contribution of an optically thick atmospheric path. This term, absent
from existing PASS analyses operating at mmWave, is small at lower THz
windows but becomes comparable to the thermal floor at $\geq\!1$~THz
where atmospheric absorption is high.

After coherent combining of the $N_{\rm act}$ PAs and substitution
of~\eqref{eq:equal_power}--\eqref{eq:free_space}, the received signal
at the UE on subcarrier $m$ admits the \emph{unified}
expression~\eqref{eq:received_signal_master} given at the top of this
page, in which all four THz-unique physical effects are marked in line. Here $s_m\!\in\!\mathbb{C}$ is the unit-energy data
symbol on subcarrier $m$, $G_{\rm t}$ and $G_{\rm r}$ are the BS and UE
antenna gains, and
$w_m\!\sim\!\mathcal{CN}(0, N_{\rm total}(f_m, r_{\rm c})\Delta f)$.
Dividing the signal-component power in~\eqref{eq:received_signal_master}
by the noise variance yields the per-subcarrier SNR at the UE,
\setcounter{equation}{7}
\begin{equation}
\gamma_m(x_{\rm c}) \,=\, \frac{P_{\rm t}(f_m)\,G_{\rm t}\,G_{\rm r}\,N_{\rm act}\,
|A_m(x_{\rm c})|^2\,\widetilde G_m(x_{\rm c})}{N_{\rm total}(f_m, r_{\rm c})\,\Delta f},
\label{eq:sinr_per_subc}
\end{equation}
where the placement-dependent effective channel gain
\begin{equation}
\widetilde G_m(x_{\rm c}) \,\triangleq\, \frac{G_{\rm FS}(f_m)\,
e^{-2\alpha_{\rm g}(f_m)\,x_{\rm c}\,-\,\alpha_{\rm atm}(f_m)\,r_{\rm c}}}{r_{\rm c}^2},
\label{eq:eff_gain}
\end{equation}
with free-space gain factor
$G_{\rm FS}(f)\!\triangleq\!\big(c/(4\pi f)\big)^2$, and the normalized
array factor
\begin{equation}
A_m(x_{\rm c}) \,\triangleq\, \frac{1}{N_{\rm act}}\sum_{n=1}^{N_{\rm act}}
e^{-j\,\Delta\phi_{n,m}},\quad
\Delta\phi_{n,m} \,\triangleq\, \phi_{n,m} - \phi_{n,c},
\label{eq:array_factor_def}
\end{equation}
captures the per-subcarrier residual beam-squint phase deviation
relative to the $f_c$-aligned configuration, with
$|A_m|^2\!\in\![0,1]$ and $|A_c|^2\!=\!1$ at $f\!=\!f_c$.
The next two sections derive closed-form expressions for (i) the
band-averaged coherence factor
$\bar\rho\!\triangleq\!(1/B)\!\int_{-B/2}^{+B/2}\!|A_m|^2\,d\Delta f_m$
(Section~\ref{sec:beam_squint}) and (ii) the cluster-center placement
$x_{\rm c}^{\star}$ that maximizes the worst-subcarrier $\gamma_m$
(Section~\ref{sec:optimal_placement}).

\section{Coherence Factor for Clustered Multi-PA}\label{sec:beam_squint}

By the half-guided-wavelength spacing
$\delta_n\!=\!(n\!-\!(N_{\rm act}\!+\!1)/2)\,\lambda_{\rm g}(f_c)/2$, the in-WG phase difference at the carrier $f_c$ is an integer multiple of $\pi$ and can be removed by a
common phase compensation. At an off-center subcarrier $f_m\!=\!f_c\!+\!\Delta
f_m$, neglecting the higher-order $n_{\rm eff}(f)$-dispersion term that is
small within a single THz window, the residual squint phase per PA is
(up to a common offset)
\begin{equation}
\delta_{\phi_{n,m}} = \frac{\pi(n\!-\!(N_{\rm act}\!+\!1)/2)\,\Delta f_m}{f_c}.
\label{eq:residual_phase}
\end{equation}
The normalized array factor $A_m(x_{\rm c})$ in
\eqref{eq:array_factor_def} therefore admits the closed
form
\begin{equation}
|A_m|^2 = \frac{1}{N_{\rm act}^2}\,\left|\frac{\sin\!\big(N_{\rm act}\,\pi\,\Delta f_m / (2 f_c)\big)}{\sin\!\big(\pi\,\Delta f_m / (2 f_c)\big)}\right|^{2}.
\label{eq:dirichlet}
\end{equation}
Notably, $|A_m|^2$ is independent of the cluster center $x_{\rm c}$. Under the
small-cluster assumption, beam squint thus depends only on the array
configuration $(N_{\rm act}, B/f_c)$, not on the cluster position; this
decoupling is the key structural simplification that makes the
placement problem in Section~\ref{sec:optimal_placement} tractable.

\begin{theorem}[Closed-Form Coherence Factor]\label{thm:coherence}
The band-averaged array gain (coherence factor) for clustered multi-PA
THz-PASS is
\begin{equation}
\bar\rho \,\triangleq\, \frac{1}{B}\!\int_{-B/2}^{+B/2}\!\!|A_m|^2\,d\Delta f_m \;\approx\; 1 - \frac{\pi^2\,(N_{\rm act}^2\!-\!1)\,B^2}{144\,f_c^2},
\label{eq:rhobar_closedform}
\end{equation}
asymptotically tight for $N_{\rm act}\,B/f_c\!\ll\!1$.
\end{theorem}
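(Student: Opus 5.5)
The plan is to Taylor-expand the closed-form Dirichlet kernel \eqref{eq:dirichlet} to second order in the normalized offset $\Delta f_m/f_c$, and then integrate the quadratic term over the band. A direct small-argument expansion of the ratio of sines works. However, the variance form of the array factor is cleaner and makes the coefficient transparent, so that is the route I would take.

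\textbf{Step 1 (second-order expansion of $|A_m|^2$).} Write $k_n \triangleq n-(N_{\rm act}+1)/2$ and $x \triangleq \pi\Delta f_m/f_c$. By \eqref{eq:residual_phase} the residual phase is $k_n x$, so $A_m = \frac{1}{N_{\rm act}}\sum_n e^{-jk_n x}$. Expanding the exponential to second order gives
\begin{equation*}
A_m = 1 - j x\,\overline{k} - \tfrac{x^2}{2}\,\overline{k^2} + O(x^3),
\end{equation*}
where bars denote averages over $n$. The symmetric indexing gives $\overline{k}=0$, and
\begin{equation*}
\overline{k^2}=\frac{1}{N_{\rm act}}\sum_n k_n^2 = \frac{N_{\rm act}^2-1}{12}.
\end{equation*}
Squaring the modulus then yields $|A_m|^2 = 1 - x^2\,\overline{k^2} + O(x^4)$. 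Equivalently, $|A_m|^2 \approx 1 - \frac{(N_{\rm act}^2-1)\pi^2\Delta f_m^2}{12 f_c^2}$.

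As a cross-check, I would confirm the same coefficient directly from \eqref{eq:dirichlet}, using $\sin(Nu)/(N\sin u) = 1 - (N^2-1)u^2/6 + O(N^4u^4)$ with $u=\pi\Delta f_m/(2f_c)$. Squaring gives $1-(N^2-1)u^2/3$, which matches the expression above.

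\textbf{Step 2 (band averaging).} Integrate over $\Delta f_m$ uniformly on $[-B/2,B/2]$. The linear and cubic terms vanish by symmetry, and $\frac{1}{B}\int_{-B/2}^{B/2}\Delta f^2\,d\Delta f = B^2/12$. Substituting, the quadratic term becomes $\frac{\pi^2(N_{\rm act}^2-1)}{12 f_c^2}\cdot\frac{B^2}{12}$, which is exactly \eqref{eq:rhobar_closedform}.

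\textbf{Step 3 (tightness) — the main obstacle.} This is the step requiring care: the remainder must be controlled uniformly in $N_{\rm act}$, not only in $x$. The fourth-order term of $|A_m|^2$ involves $\overline{k^4}$ and $(\overline{k^2})^2$, both $O(N_{\rm act}^4)$. So the neglected remainder is $O\big((N_{\rm act}\pi\Delta f_m/f_c)^4\big)$, and after averaging it is $O\big((N_{\rm act}B/f_c)^4\big)$. Its ratio to the retained quadratic correction is $O\big((N_{\rm act}B/f_c)^2\big)\to 0$, which gives the asymptotic tightness for $N_{\rm act}B/f_c\ll 1$.

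To make this rigorous, I would bound $|e^{-jk x}-1+jkx+k^2x^2/2|\le |kx|^3/6$ and $|k_n|\le N_{\rm act}/2$. I would also note that $|A_m|^2\ge 1-x^2\overline{k^2}$, by the standard inequality $1-\cos\theta\le\theta^2/2$ applied to $|A_m|^2=\frac{1}{N^2}\sum_{n,n'}\cos((k_n-k_{n'})x)$. This makes the closed form a lower bound on $\bar\rho$ as well as an approximation.
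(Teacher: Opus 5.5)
Your proposal is correct and follows essentially the paper's route. It uses a second-order expansion of $|A_m|^2$ (your cross-check via $\sin(Nu)/(N\sin u)$ is exactly the paper's Dirichlet-kernel expansion, and your moment form gives the same $1-(N_{\rm act}^2-1)u^2/3$), followed by band-averaging $\Delta f_m^2$ to $B^2/12$. Your Step 3 goes beyond the paper, which gives no remainder control: the lower bound via $\cos\theta\ge 1-\theta^2/2$ on the double-sum form is correct, and that same form with $|\cos\theta-1+\theta^2/2|\le\theta^4/24$ gives the $O\big((N_{\rm act}B/f_c)^4\big)$ error you claim directly, whereas the cubic bound on the exponential you propose only certifies $O\big((N_{\rm act}B/f_c)^3\big)$ (still enough for asymptotic tightness).
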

\begin{IEEEproof}
Taylor-expanding the Dirichlet kernel at $u\!=\!\pi\Delta f_m/(2f_c)$
gives $|A_m|^2\!\approx\!1\!-\!(N_{\rm act}^2\!-\!1)u^2/3$; integrating
over $\Delta f_m\!\in\![-B/2, B/2]$ yields~\eqref{eq:rhobar_closedform}.
\end{IEEEproof}

\section{Optimal Cluster-Center Placement and Inversion Threshold}\label{sec:optimal_placement}

Substituting Theorem~\ref{thm:coherence} into~\eqref{eq:sinr_per_subc} and
noting that $\bar\rho\,N_{\rm act}$ is a positive multiplicative constant
independent of $x_{\rm c}$, the cluster-center optimization problem reduces to
\begin{equation}
x_{\rm c}^{\star} \,=\, \arg\!\max_{x_{\rm c}\in[\Delta_{\rm c},\,L-\Delta_{\rm c}]}\;\min_{m\in[M]}\;\widetilde\gamma_m(x_{\rm c}),
\label{eq:p1}
\end{equation}
where
\begin{equation}
\widetilde\gamma_m(x_{\rm c}) = \frac{G_{\rm FS}(f_m)\,e^{-2\alpha_{\rm g}(f_m)\,x_{\rm c} - \alpha_{\rm atm}(f_m)\,r_{\rm c}(x_{\rm c})}}{r_{\rm c}(x_{\rm c})^2\,N_{\rm total}(f_m, r_{\rm c}(x_{\rm c}))}.
\label{eq:p1_obj}
\end{equation}
The max-min objective robustly captures the worst-subcarrier SINR, which
at THz resides either at a band edge or at a dominant atmospheric
resonance peak; $\widetilde\gamma_m(x_{\rm c})$ is unimodal in $x_{\rm c}$
(log-convex $1/r_{\rm c}^2$ $\times$ monotone $e^{-2\alpha_{\rm g} x_{\rm c}}$).

\begin{theorem}[SINR-Equalization Condition]\label{thm:xc_star}
Assume the worst-SINR subcarriers at the optimum are the band edges
$\{f_{\rm low}, f_{\rm high}\}\!=\!\{f_c\!-\!B/2,\,f_c\!+\!B/2\}$. Then the
optimum $x_{\rm c}^{\star}$ of~\eqref{eq:p1} satisfies the SINR-equalization
condition
$\widetilde\gamma_{\rm low}(x_{\rm c}^{\star})
\!=\!\widetilde\gamma_{\rm high}(x_{\rm c}^{\star})$, equivalent to
\begin{equation}
2\,\Delta\alpha_{\rm g}\,x_{\rm c}^{\star} + \Delta\alpha_{\rm atm}\,r_{\rm c}(x_{\rm c}^{\star})
\,=\, \Delta\!\ln G - \Delta\!\ln N\!\big(r_{\rm c}(x_{\rm c}^{\star})\big),
\label{eq:eq8}
\end{equation}
where the band-edge differentials are
$\Delta\alpha_{\rm g}\!\triangleq\!\alpha_{\rm g}(f_{\rm high})\!-\!\alpha_{\rm g}(f_{\rm low})$,
$\Delta\alpha_{\rm atm}\!\triangleq\!\alpha_{\rm atm}(f_{\rm high})\!-\!\alpha_{\rm atm}(f_{\rm low})$,
$\Delta\!\ln G\!\triangleq\!\ln[G_{\rm FS}(f_{\rm low})/G_{\rm FS}(f_{\rm high})]$,
and
$\Delta\!\ln N(r)\!\triangleq\!\ln[N_{\rm total}(f_{\rm high}, r)/N_{\rm total}(f_{\rm low}, r)]$.

Under the user-projection regime $x_{\rm c}^{\star}\!\approx\!x_{\rm u}$,
$r_{\rm c}\!\approx\!d_{\min}\!\triangleq\!\sqrt{y_{\rm u}^2\!+\!d^2}$,
the explicit closed form ($\Delta\alpha_{\rm g}\!>\!0$) is
\begin{equation}
x_{\rm c}^{\star} \,\approx\, \frac{\Delta\!\ln G - \Delta\!\ln N(d_{\min}) - \Delta\alpha_{\rm atm}\,d_{\min}}{2\,\Delta\alpha_{\rm g}}.
\label{eq:eq9}
\end{equation}
Otherwise,~\eqref{eq:eq8} is solved by 1D bisection over
$[\Delta_{\rm c},\,L\!-\!\Delta_{\rm c}]$.
\end{theorem}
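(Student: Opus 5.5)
The plan is the standard max-min argument for two active constraints. Under the hypothesis, the inner minimum in~\eqref{eq:p1} near the optimum reduces to $\min\{\widetilde\gamma_{\rm low}(x_{\rm c}),\widetilde\gamma_{\rm high}(x_{\rm c})\}$.

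\textbf{Step 1: the optimum is an equalization point.} Each $\widetilde\gamma_{\rm edge}$ in~\eqref{eq:p1_obj} is continuous and unimodal in $x_{\rm c}$, as noted after~\eqref{eq:p1_obj}. The minimum of two continuous unimodal functions attains its maximum at one of two kinds of point. Either it is the peak of one function at which that function still lies below the other, or it is a crossing point. In the first case that edge alone is worst and its own peak is optimal. But then the other edge is not a worst subcarrier at the optimum, which contradicts the assumption that both edges are. Hence, in the interior case, $\widetilde\gamma_{\rm low}(x_{\rm c}^\star)=\widetilde\gamma_{\rm high}(x_{\rm c}^\star)$: one function increases and the other decreases through the crossing, so moving away from it lowers the minimum. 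I will make this precise by a sign argument on the difference $D(x)=\ln\widetilde\gamma_{\rm low}-\ln\widetilde\gamma_{\rm high}$. This is the step I expect to be the main obstacle. Unimodality alone does not guarantee opposite slopes at the crossing, so I will lean on the stated assumption and treat it as the defining regime rather than prove it in full generality.

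\textbf{Step 2: the log form.} Take logarithms of~\eqref{eq:p1_obj} at both edges and subtract. The $-2\ln r_{\rm c}$ terms cancel because $r_{\rm c}$ does not depend on frequency. The remaining terms give
\[
\ln G_{\rm FS}(f_{\rm low})-2\alpha_{\rm g}(f_{\rm low})x-\alpha_{\rm atm}(f_{\rm low})r_{\rm c}-\ln N_{\rm total}(f_{\rm low},r_{\rm c})
\]
on one side, set equal to the same expression at $f_{\rm high}$. Rearranging with the stated definitions of $\Delta\alpha_{\rm g}$, $\Delta\alpha_{\rm atm}$, $\Delta\ln G$ and $\Delta\ln N$ yields exactly~\eqref{eq:eq8}. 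This is routine bookkeeping, and I will only check signs, for example that $\Delta\ln G=2\ln(f_{\rm high}/f_{\rm low})>0$.

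\textbf{Step 3: the explicit form.} In the user-projection regime, freeze $r_{\rm c}(x_{\rm c}^\star)\approx d_{\min}$ in the $\Delta\alpha_{\rm atm}$ and $\Delta\ln N$ terms. This is justified because $r_{\rm c}$ is stationary at $x_{\rm c}=x_{\rm u}$, so the error is only second order in $x_{\rm c}-x_{\rm u}$. Equation~\eqref{eq:eq8} then becomes linear in $x_{\rm c}^\star$. Dividing by $2\Delta\alpha_{\rm g}>0$ gives~\eqref{eq:eq9}.

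\textbf{Step 4: the general case.} Outside that regime, $D(x)$ is continuous on $[\Delta_{\rm c},L-\Delta_{\rm c}]$. A sign change of $D$ brackets the root, so bisection on $D$ solves~\eqref{eq:eq8}. If $D$ has no sign change, the optimum sits at an endpoint, or at the peak of a single edge function, which falls outside the equalization hypothesis.
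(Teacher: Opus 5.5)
Your route is the paper's own: both edges binding $\Rightarrow$ equalization, take logarithms, freeze $r_{\rm c}\approx d_{\min}$. But Step~2 does not deliver \eqref{eq:eq8}. Do the subtraction with the stated definitions. The $r_{\rm c}^{-2}$ factors cancel, and every remaining term enters with a plus sign:
\[
\ln\widetilde\gamma_{\rm low}(x_{\rm c})-\ln\widetilde\gamma_{\rm high}(x_{\rm c})
=\Delta\!\ln G+2\,\Delta\alpha_{\rm g}\,x_{\rm c}+\Delta\alpha_{\rm atm}\,r_{\rm c}+\Delta\!\ln N(r_{\rm c}).
\]
Equalization therefore reads $2\,\Delta\alpha_{\rm g}\,x_{\rm c}^{\star}+\Delta\alpha_{\rm atm}\,r_{\rm c}=-\Delta\!\ln G-\Delta\!\ln N(r_{\rm c})$. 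This is \eqref{eq:eq8} with the sign of $\Delta\!\ln G$ reversed, and Step~3 then produces \eqref{eq:eq9} with $-\Delta\!\ln G$ in the numerator.

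The sign check you promise is exactly where this surfaces. $\Delta\!\ln G=2\ln(f_{\rm high}/f_{\rm low})>0$ and $\Delta\alpha_{\rm g}>0$ both favour the low edge, since the high edge suffers more spreading \emph{and} more in-WG loss. If also $\Delta\alpha_{\rm atm}\ge 0$, which forces $\Delta\!\ln N\ge 0$, the difference above is strictly positive on the whole feasible interval. Then no crossing exists and $f_{\rm high}$ is the only binding subcarrier. The theorem's equal-binding hypothesis cannot be met, and the correctly signed closed form is negative, i.e.\ infeasible. A crossing needs $\Delta\alpha_{\rm atm}r_{\rm c}+\Delta\!\ln N$ sufficiently negative (the low edge nearer an absorption line) or $\Delta\alpha_{\rm g}<0$.

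So you cannot assert that routine bookkeeping gives exactly \eqref{eq:eq8}. You must either redefine $\Delta\!\ln G\triangleq\ln[G_{\rm FS}(f_{\rm high})/G_{\rm FS}(f_{\rm low})]$ or carry the corrected sign. You should also state the regime in which the hypothesis can actually hold. The paper's one-line proof passes over the same point.

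Step~1, by contrast, does more than you need and leans on a premise you should drop. If both edges are worst at $x_{\rm c}^{\star}$, then $\widetilde\gamma_{\rm low}(x_{\rm c}^{\star})=\min_m\widetilde\gamma_m(x_{\rm c}^{\star})=\widetilde\gamma_{\rm high}(x_{\rm c}^{\star})$ by definition, with no appeal to unimodality or slopes.

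The imported unimodality claim is in fact false in general. Ignoring the small atmospheric terms, $\partial_{x_{\rm c}}\ln\widetilde\gamma_m\approx-2\alpha_{\rm g}(f_m)+2(x_{\rm u}-x_{\rm c})/r_{\rm c}^2$. This is negative for $x_{\rm c}>x_{\rm u}$ and for large $x_{\rm u}-x_{\rm c}$, but positive on an interval below $x_{\rm u}$ whenever $2\alpha_{\rm g}d_{\min}<1$. For a user far from the feed (e.g.\ the paper's $x_{\rm u}=40$~m, $\alpha_{\rm g}=0.43$~dB/m), $\widetilde\gamma_m$ thus has a boundary local maximum at $x_{\rm c}=\Delta_{\rm c}$ as well as an interior one near $x_{\rm u}$.

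The opposite-slope issue you flag matters only for the converse: that the root found by bisection in Step~4 is the max-min optimum. That converse would additionally require a comparison against the feed-end candidate.
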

\begin{IEEEproof}
Unimodality of $\widetilde\gamma_m$ ensures the max-min is attained where
the two binding band-edge subcarriers achieve equal SINR; taking
$\ln\widetilde\gamma_{\rm low}\!=\!\ln\widetilde\gamma_{\rm high}$ and
substituting $r_{\rm c}\!\approx\!d_{\min}$ gives~\eqref{eq:eq9}.
\end{IEEEproof}

\begin{corollary}[Placement-Inversion Threshold]\label{cor:inversion}
Define the regime-change threshold $\alpha_{\rm g}^{\star\star}$ as the value
of $\Delta\alpha_{\rm g}$ at which~\eqref{eq:eq8} balances with
$x_{\rm c}\!=\!x_{\rm u}$:
\begin{equation}
\alpha_{\rm g}^{\star\star} = \frac{\Delta\!\ln G - \Delta\!\ln N(d_{\min}) - \Delta\alpha_{\rm atm}\,d_{\min}}{2\,x_{\rm u}}.
\label{eq:alpha_star_star}
\end{equation}
For $\Delta\alpha_{\rm g}\!<\!\alpha_{\rm g}^{\star\star}$, the optimum lies in
\textit{Regime~A} ($x_{\rm c}^{\star}\!\approx\!x_{\rm u}$, ``placed above user''),
recovering the existing PASS placement
rule~\cite{Ref_ding2025flexible,Ref_xu2025inwgblockage}. For
$\Delta\alpha_{\rm g}\!>\!\alpha_{\rm g}^{\star\star}$, the optimum shifts toward
the waveguide feed point ($x_{\rm c}^{\star}\!\to\!0$, \textit{Regime~B}). In a
clean transparency window with negligible atmospheric and noise differentials,
\eqref{eq:alpha_star_star} simplifies to
\begin{equation}
\alpha_{\rm g}^{\star\star} \,\approx\, \frac{\ln(1+B/f_c)}{x_{\rm u}} \,\approx\, \frac{B}{f_c\,x_{\rm u}}\;\;\text{for}\;\;B/f_c\!\ll\!1.
\label{eq:alpha_star_star_clean}
\end{equation}
\end{corollary}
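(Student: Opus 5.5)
The threshold follows directly from the SINR-equalization condition of Theorem~\ref{thm:xc_star}. The regime boundary is the value of $\Delta\alpha_{\rm g}$ at which the equalizing point coincides with the user projection. So I substitute $x_{\rm c}^{\star}=x_{\rm u}$ and $r_{\rm c}=d_{\min}$ into~\eqref{eq:eq8} and solve the resulting linear equation for $\Delta\alpha_{\rm g}$:
\[
2\,\Delta\alpha_{\rm g}\,x_{\rm u}+\Delta\alpha_{\rm atm}\,d_{\min}=\Delta\!\ln G-\Delta\!\ln N(d_{\min}).
\]
This gives~\eqref{eq:alpha_star_star}. Equivalently, setting $x_{\rm c}^{\star}=x_{\rm u}$ in the closed form~\eqref{eq:eq9} and inverting gives the same expression.

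For the regime dichotomy, I view the right-hand side of~\eqref{eq:eq9} as $C/(2\Delta\alpha_{\rm g})$, where $C\triangleq\Delta\!\ln G-\Delta\!\ln N(d_{\min})-\Delta\alpha_{\rm atm}d_{\min}$. I assume $C>0$, which is the physically relevant case with spreading-loss dominance. The equalizing point is then strictly decreasing in $\Delta\alpha_{\rm g}$ and equals $x_{\rm u}$ exactly at $\alpha_{\rm g}^{\star\star}$.
\begin{itemize}
\item For $\Delta\alpha_{\rm g}<\alpha_{\rm g}^{\star\star}$, the equalizing point lies beyond $x_{\rm u}$. Since each $\widetilde\gamma_m$ is unimodal with its peak near $x_{\rm u}$ (the $1/r_{\rm c}^2$ factor dominates), the max-min is attained at the common near-peak, $x_{\rm c}^{\star}\approx x_{\rm u}$. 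This is Regime~A.
\item For $\Delta\alpha_{\rm g}>\alpha_{\rm g}^{\star\star}$, the equalizing point moves below $x_{\rm u}$ and decreases toward $0$ as $\Delta\alpha_{\rm g}$ grows. It is truncated at $\Delta_{\rm c}\approx0$ by the feasible interval. This is Regime~B.
\end{itemize}
Making this step rigorous means comparing the equalization point against the individual unimodal maximizers. I expect this to be the main obstacle, and I would argue it only at the level of the approximations already used in Theorem~\ref{thm:xc_star}.

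For the clean-window simplification, I set $\Delta\alpha_{\rm atm}\approx0$ and $\Delta\!\ln N\approx0$, so that only the spreading term remains. I compute $\Delta\!\ln G=\ln\big[(f_{\rm high}/f_{\rm low})^2\big]=2\ln\frac{f_c+B/2}{f_c-B/2}$. Dividing by $2x_{\rm u}$ gives
\[
\alpha_{\rm g}^{\star\star}=\frac{1}{x_{\rm u}}\ln\frac{1+B/(2f_c)}{1-B/(2f_c)}.
\]
To first order in $B/f_c$ this equals $B/(f_c x_{\rm u})$, and it agrees with $\ln(1+B/f_c)/x_{\rm u}$ up to $O\big((B/f_c)^2\big)$ terms. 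This yields both forms in~\eqref{eq:alpha_star_star_clean}, with the stated $\approx$ understood at leading order.
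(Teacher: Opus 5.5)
Your derivation of \eqref{eq:alpha_star_star} (set $x_{\rm c}=x_{\rm u}$, $r_{\rm c}=d_{\min}$ in \eqref{eq:eq8} and solve for $\Delta\alpha_{\rm g}$) is what the paper does implicitly; it gives no separate proof. The same holds for your clean-window form. Your exact $\frac{1}{x_{\rm u}}\ln\frac{1+B/(2f_c)}{1-B/(2f_c)}$ is sharper than the paper's $\ln(1+B/f_c)/x_{\rm u}$, which agrees with it only to first order.

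The genuine gap is the regime dichotomy: your Regime~B bullet is incompatible with the premise you use for Regime~A. Suppose $\widetilde\gamma_{\rm low}$ and $\widetilde\gamma_{\rm high}$ are unimodal on $[\Delta_{\rm c},L-\Delta_{\rm c}]$ with maximizers $p_1\le p_2$. Then their pointwise minimum is nondecreasing on $[\Delta_{\rm c},p_1]$ and nonincreasing on $[p_2,L-\Delta_{\rm c}]$, so the max-min is attained in $[p_1,p_2]$. You use this correctly when the crossing lies beyond $x_{\rm u}$. It applies equally when the crossing $C/(2\Delta\alpha_{\rm g})$ lies \emph{below} $x_{\rm u}$: to the right of the crossing, the minimum is the other band edge's $\widetilde\gamma$, which keeps rising to its own peak near $x_{\rm u}$. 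So with both peaks near $x_{\rm u}$, $x_{\rm c}^\star$ stays near $x_{\rm u}$ wherever the crossing is; it does not follow the crossing toward the feed. Appealing to Theorem~\ref{thm:xc_star} does not rescue this, because its hypothesis (both edges binding at the optimum) fails whenever the crossing lies outside $[p_1,p_2]$. The step you flagged as the main obstacle therefore does not merely lack rigor; carrying it out refutes it.

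By the same hull argument, Regime~B requires at least one individual maximizer $p_m$ to move to the feed. That is governed by the absolute loss $\alpha_{\rm g}(f_m)$, not by $\Delta\alpha_{\rm g}$. Since $-\ln r_{\rm c}^2$ is convex for $|x_{\rm c}-x_{\rm u}|>d_{\min}$, $\widetilde\gamma_m$ is not log-concave. Ignoring atmosphere and noise, its global maximizer jumps to the feed end roughly when $e^{-2\alpha_{\rm g}(f_m)x_{\rm u}}/d_{\min}^2<1/(x_{\rm u}^2+d_{\min}^2)$. This is a threshold set by the geometry and unrelated to $\alpha_{\rm g}^{\star\star}\approx B/(f_c x_{\rm u})$.

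Relatedly, your justification of $C>0$ by ``spreading-loss dominance'' rests only on the sign printed in \eqref{eq:eq8}. Deriving directly from \eqref{eq:p1_obj} with the stated definitions gives $\ln\widetilde\gamma_{\rm low}-\ln\widetilde\gamma_{\rm high}=\Delta\!\ln G+2\Delta\alpha_{\rm g}x_{\rm c}+\Delta\alpha_{\rm atm}r_{\rm c}+\Delta\!\ln N(r_{\rm c})$. So in a clean window with $\Delta\alpha_{\rm g}>0$, spreading loss and in-waveguide loss both penalize the high edge. There is then no crossing at all, and the max-min reduces to maximizing $\widetilde\gamma_{\rm high}$ alone. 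Hence \eqref{eq:alpha_star_star} is a correct algebraic consequence of \eqref{eq:eq8} as printed. The Regime~A/B claims, however, need an argument about the absolute loss level, which the equalization condition cannot provide.
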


\section{Switched-Feed PASS}\label{sec:sfpass}

The placement inversion of Corollary~\ref{cor:inversion} forces
$x_{\rm c}^{\star}\!\to\!0$ at high $\alpha_{\rm g}$ because every meter
of cluster offset from a fixed single end-feed extracts $2\alpha_{\rm g}$
Np of in-WG loss. We address this with \emph{SF-PASS}:
a centrally located RF switch of insertion loss $L_{\rm sw}$
($1$--$3$~dB at THz) routes the BS RF chain into one of $K\!\geq\!2$
contiguous waveguide segments. For a 1D corridor of length $L$, the
canonical $K\!=\!2$ instance feeds at $x_f\!=\!L/2$ with segments
$\mathcal{S}_1\!=\![0,L/2]$, $\mathcal{S}_2\!=\![L/2,L]$; for a 2D
side-$L$ room, $K\!=\!4$ feeds at room center, routing into four arms
$\{\mathcal{S}_1,\ldots,\mathcal{S}_4\}$ of length $L/2$. Let
$k^{\star}$ index the segment containing the user. The cluster is then
constrained to $\boldsymbol{\psi}_{\rm c}\!\in\!\mathcal{S}_{k^{\star}}$,
the feed-to-cluster in-WG distance becomes
$\|\boldsymbol{\psi}_{\rm c}\!-\!\boldsymbol{\psi}_f\|$ (in place of
$x_{\rm c}$), and the cluster-to-user distance generalizes to
$r_{\rm c}\!\triangleq\!\|\boldsymbol{\psi}_{\rm c}\!-\!\boldsymbol{\psi}_{\rm u}\|$
(reducing to the 1D form for the corridor). Substituting
into~\eqref{eq:eff_gain} and absorbing the band-edge-invariant
$L_{\rm sw}$ as a constant prefactor gives
\begin{equation}
\widetilde G_m^{\rm SF}(\boldsymbol{\psi}_{\rm c}) \,=\, L_{\rm sw}^{-1}\,
\frac{G_{\rm FS}(f_m)\,e^{-2\alpha_{\rm g}(f_m)\|\boldsymbol{\psi}_{\rm c}-\boldsymbol{\psi}_f\|-\alpha_{\rm atm}(f_m) r_{\rm c}}}{r_{\rm c}^2}.
\label{eq:eff_gain_sfpass}
\end{equation}

\begin{theorem}[SF-PASS Placement]\label{thm:sfpass}
The SF-PASS optimum
$\boldsymbol{\psi}_{\rm c}^{\star,{\rm SF}}\!\in\!\mathcal{S}_{k^{\star}}$
satisfies the SINR-equalization condition of Theorem~\ref{thm:xc_star}
with $x_{\rm c}\!\to\!\|\boldsymbol{\psi}_{\rm c}\!-\!\boldsymbol{\psi}_f\|$,
giving
\begin{equation}
\|\boldsymbol{\psi}_{\rm c}^{\star,{\rm SF}}\!-\!\boldsymbol{\psi}_f\| \,\approx\,
\frac{\Delta\!\ln G - \Delta\!\ln N(d_{\min}) - \Delta\alpha_{\rm atm}\,d_{\min}}{2\,\Delta\alpha_{\rm g}}.
\label{eq:xc_sfpass}
\end{equation}
\end{theorem}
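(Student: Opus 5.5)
The plan is to reduce Theorem~\ref{thm:sfpass} to Theorem~\ref{thm:xc_star} by a change of variable. Restricted to the user segment $\mathcal{S}_{k^{\star}}$, the SF-PASS effective gain~\eqref{eq:eff_gain_sfpass} has exactly the form of~\eqref{eq:eff_gain}, with two differences. First, the in-WG distance $x_{\rm c}$ is replaced by $s\triangleq\|\boldsymbol{\psi}_{\rm c}-\boldsymbol{\psi}_f\|$. Second, there is an extra factor $L_{\rm sw}^{-1}$.

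\textbf{Step 1: parametrize the segment.} Since each segment is a straight arm emanating from the feed, I will parametrize $\mathcal{S}_{k^{\star}}$ by the arc length $s\in[\Delta_{\rm c},L/2-\Delta_{\rm c}]$, measured from $\boldsymbol{\psi}_f$. Then $\boldsymbol{\psi}_{\rm c}(s)=\boldsymbol{\psi}_f+s\,\mathbf{e}_{k^{\star}}$, and $r_{\rm c}(s)=\|\boldsymbol{\psi}_{\rm c}(s)-\boldsymbol{\psi}_{\rm u}\|$. This is the same 1D structure as the single-feed corridor: $r_{\rm c}(s)^2$ is a quadratic in $s$ with minimum value $d_{\min}^2$, attained at the user's projection onto the arm.

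\textbf{Step 2: the switch loss drops out.} The per-subcarrier objective becomes
\[
\widetilde\gamma_m^{\rm SF}(s)=L_{\rm sw}^{-1}\,\widetilde\gamma_m(s),
\]
where $\widetilde\gamma_m$ is given by~\eqref{eq:p1_obj} with $x_{\rm c}\to s$. Because $L_{\rm sw}$ is band-edge invariant, it is a common positive constant. It therefore changes neither the maximizer of the max-min problem nor the equation $\widetilde\gamma_{\rm low}=\widetilde\gamma_{\rm high}$.

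\textbf{Step 3: apply Theorem~\ref{thm:xc_star}.} The hypotheses carry over: the band edges are binding, and each $\widetilde\gamma_m(s)$ is unimodal, being a log-convex distance term times the monotone factor $e^{-2\alpha_{\rm g}s}$. The max-min optimum is then the equalization point. Taking logarithms and cancelling $\ln L_{\rm sw}$ gives~\eqref{eq:eq8} with $x_{\rm c}\to s$. Finally, in the user-projection regime I set $r_{\rm c}\approx d_{\min}$ and solve the resulting linear equation in $s$, assuming $\Delta\alpha_{\rm g}>0$. This yields~\eqref{eq:xc_sfpass}, which is the same expression as~\eqref{eq:eq9}.

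\textbf{Main obstacle.} The delicate point is justifying the projection regime for SF-PASS. Near-equality of $s^\star$ and the feed-to-projection distance of the user is precisely what Corollary~\ref{cor:inversion} governs, now with $x_{\rm u}$ replaced by the feed-to-projection distance inside $\mathcal{S}_{k^{\star}}$. That distance is at most $L/2$, roughly half the single-feed value. The inversion threshold $\alpha_{\rm g}^{\star\star}$ is inversely proportional to this distance, so it is at least doubled. Regime~A is therefore more likely to hold, and I will invoke the corollary in this form to support the approximation. If the regime fails, the statement still holds in its implicit form~\eqref{eq:eq8}, solved by bisection over $s$.

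In the 2D $K=4$ case, I also need the cluster to move along a single arm, so that the problem stays one-dimensional and Theorem~\ref{thm:xc_star} applies directly. This holds by construction of the arms.
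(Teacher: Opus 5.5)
Your proposal is correct and follows the paper's own proof: substitute the feed-to-cluster distance $\|\boldsymbol{\psi}_{\rm c}-\boldsymbol{\psi}_f\|$ for $x_{\rm c}$, note that the band-edge-invariant factor $L_{\rm sw}^{-1}$ cancels from $\widetilde\gamma_{\rm low}=\widetilde\gamma_{\rm high}$, and let the argument of Theorem~\ref{thm:xc_star} carry over unchanged, including its user-projection approximation $r_{\rm c}\approx d_{\min}$. Your extra discussion of when that regime holds goes beyond what the paper's proof uses, and it contains one inaccurate claim: the feed-to-projection distance is roughly half the single-feed value only for users not too close to $x=0$ (in the $K=2$ corridor, $|x_{\rm u}-L/2|\le x_{\rm u}/2$ requires $x_{\rm u}\ge L/3$), though this does not affect the reduction itself.
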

\begin{IEEEproof}
$L_{\rm sw}$ is band-edge invariant and cancels from SINR equalization;
Theorem~\ref{thm:xc_star}'s proof carries over under the stated
substitution.
\end{IEEEproof}

\begin{corollary}[SF-PASS Inversion and Payoff]\label{cor:sfpass}
SF-PASS halves the worst-case feed-to-cluster distance from $L$ to $L/2$
(both 1D $K\!=\!2$ and 2D $K\!=\!4$), lowering the inversion threshold
to
\begin{equation}
\alpha_{\rm g}^{\star\star,\,{\rm SF}} \,\approx\,
\frac{2\,\ln(1\!+\!B/f_c)}{L}.
\label{eq:alpha_star_sfpass}
\end{equation}
SF-PASS outperforms single-feed PASS when in-WG savings exceed
$L_{\rm sw}$:
\begin{equation}
\alpha_{\rm g} \,>\, \alpha_{\rm g}^{(\rm sw)} \,\triangleq\,
\frac{L_{\rm sw}}{x_{\rm u}\!-\!L/2}
\quad\text{(}L_{\rm sw}\text{ in dB, distance in m).}
\label{eq:payoff_threshold}
\end{equation}
For $L\!=\!50$~m, $x_{\rm u}\!=\!40$~m, $L_{\rm sw}\!=\!2$~dB:
$\alpha_{\rm g}^{(\rm sw)}\!\approx\!0.133$~dB/m, exceeded by every
realizable THz waveguide in~\cite{Ref_atakaramians2013thz} (e.g.,
porous-COC at $0.22$~THz, $1.3$~dB/m, $\sim\!10\times$ above).
\end{corollary}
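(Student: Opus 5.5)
The corollary has three parts: a geometric claim about the worst-case distance, the inversion threshold, and the payoff threshold. I would prove each in turn from Theorem~\ref{thm:sfpass} and Corollary~\ref{cor:inversion}.

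\emph{Worst-case feed-to-cluster distance.} In the single-feed configuration $x_f=0$ and $x_{\rm c}\in[\Delta_{\rm c},L-\Delta_{\rm c}]$, so the in-WG distance $x_{\rm c}$ can approach $L$. In SF-PASS with $K=2$, the cluster lies in $\mathcal{S}_{k^\star}\subset[0,L/2]$ or $[L/2,L]$ and the feed is at $L/2$, so $\|\boldsymbol{\psi}_{\rm c}-\boldsymbol{\psi}_f\|=|x_{\rm c}-L/2|\le L/2$. For the 2D case with $K=4$, each arm has length $L/2$ measured from the room-center feed, so the same bound holds by construction.

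\emph{Inversion threshold.} Theorem~\ref{thm:sfpass} says the SF-PASS optimum obeys the equalization condition~\eqref{eq:eq8} with $x_{\rm c}$ replaced by $\|\boldsymbol{\psi}_{\rm c}-\boldsymbol{\psi}_f\|$. I would therefore repeat the derivation of Corollary~\ref{cor:inversion} with $x_{\rm u}$ replaced by the user's in-WG distance from the feed, $|x_{\rm u}-L/2|$. I would then use the clean-window simplification~\eqref{eq:alpha_star_star_clean}: dropping the atmospheric and noise differentials leaves $\Delta\!\ln G=2\ln(f_{\rm high}/f_{\rm low})$. This step needs care, because it must be matched to the paper's $\ln(1+B/f_c)$ bookkeeping. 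Evaluating at the worst case $|x_{\rm u}-L/2|=L/2$ gives~\eqref{eq:alpha_star_sfpass}. I would state explicitly that "lowering" is meant relative to the single-feed worst case, where the distance is $L$ and the threshold is $\ln(1+B/f_c)/L$. Read literally, a smaller distance \emph{raises} the threshold, and this direction issue is a point of ambiguity I would resolve by interpreting the claim as halving the binding distance.

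\emph{Payoff threshold.} I expect this to be the main obstacle, since it relies on a regime assumption rather than an exact comparison. I would compare the log effective gains~\eqref{eq:eff_gain} and~\eqref{eq:eff_gain_sfpass} in Regime~A, i.e., with the cluster placed above the user in both architectures. Then $r_{\rm c}=d_{\min}$ in both, so the spreading, atmospheric and noise terms cancel. Only two differences remain: the in-WG power loss $2\alpha_{\rm g}x_{\rm u}$ versus $2\alpha_{\rm g}(x_{\rm u}-L/2)$ for $x_{\rm u}>L/2$, and the switch loss $L_{\rm sw}$. SF-PASS wins when $2\alpha_{\rm g}(L/2)$ in Np exceeds $L_{\rm sw}$.

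Converting units, a field coefficient $\alpha_{\rm g}$ in dB/m already corresponds to a power loss of $\alpha_{\rm g}$ dB/m. The gain condition is therefore $\alpha_{\rm g}\,L/2>L_{\rm sw}$. This is not the stated denominator $x_{\rm u}-L/2$, so I would need to reconcile the two. One route is to note that the stated form corresponds to comparing SF-PASS at distance $x_{\rm u}-L/2$ against a single feed at $x_{\rm u}$ in a normalized sense. The other is to adopt the more conservative denominator $\min(x_{\rm u}-L/2,\,L/2)$. Either way, the condition $\alpha_{\rm g}\,(\text{savings distance})>L_{\rm sw}$ emerges.

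\emph{Numerical example.} Plugging in $L_{\rm sw}=2$ dB and $x_{\rm u}-L/2=15$ m gives $0.133$ dB/m, and comparing against the $1.3$ dB/m porous-COC value finishes the claim.
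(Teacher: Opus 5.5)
The paper gives no separate proof of this corollary. It is stated as a direct consequence of Theorem~\ref{thm:sfpass} and Corollary~\ref{cor:inversion}: replace the in-waveguide distance by its worst case $L/2$ in \eqref{eq:alpha_star_star_clean}, then compare dB budgets. Your first two parts follow that intended route. Your side remarks there are also correct. First, $\ln(f_{\rm high}/f_{\rm low})=\ln(1+B/f_{\rm low})$ and $\ln(1+B/f_c)$ agree to first order in $B/f_c$, which is all the $\approx$ asserts. Second, halving the worst-case distance \emph{raises} the worst-case threshold, doubling it from $\ln(1+B/f_c)/L$ to $2\ln(1+B/f_c)/L$; the word ``lowering'' in the statement is wrong as written.

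The payoff part is where you should commit rather than ``reconcile.''
\begin{itemize}
\item Your budget is right. With both clusters above the user, the saved in-waveguide length is $x_{\rm u}-(x_{\rm u}-L/2)=L/2$, independent of $x_{\rm u}$. So the Regime-A crossover is $\alpha_{\rm g}=2L_{\rm sw}/L$ ($0.08$~dB/m in the example), not $L_{\rm sw}/(x_{\rm u}-L/2)$.
\item Your first route (the ``normalized'' comparison) does not produce $x_{\rm u}-L/2$ and should be dropped.
\item The stated inequality can only be proved as a sufficient condition. For a user in the corridor, $L/2<x_{\rm u}\le L$ gives $x_{\rm u}-L/2\le L/2$, so $\alpha_{\rm g}>L_{\rm sw}/(x_{\rm u}-L/2)$ implies $\alpha_{\rm g}L/2>L_{\rm sw}$. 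That is exactly what your $\min$ device does, since the minimum always equals $x_{\rm u}-L/2$. Say explicitly that $0.133$~dB/m is a conservative bound, not the crossover claimed in Section~\ref{sec:numerical}.
\item The qualitative conclusion survives, because the smallest cited loss, $0.43$~dB/m, exceeds both $0.08$ and $0.133$~dB/m. Check ``every'' against this minimum, not only the COC example.
\item The Regime-A assumption you flag as the main obstacle is needed only for single-feed placements in $[0,L/2)$. On the user's half, $x_{\rm c}\in[L/2,L]$, you have pointwise $\widetilde G_m^{\rm SF}(x_{\rm c})=L_{\rm sw}^{-1}e^{\alpha_{\rm g}(f_m)L}\,\widetilde G_m(x_{\rm c})$, with identical $r_{\rm c}$ and noise. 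So there the comparison holds for every placement (and hence for the max-min optima over that half), not just at $x_{\rm c}=x_{\rm u}$.
\end{itemize}
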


\section{Numerical Results}\label{sec:numerical}

We adopt a long-corridor scenario ($L\!=\!50$~m, $d\!=\!3$~m), UE at
$x_{\rm u}\!=\!40$~m near the corridor end, PA cluster
$N_{\rm act}\!=\!16$ under the equal-power model, OFDM with $M\!=\!64$
subcarriers and $B\!=\!10$~GHz at $f_c\!=\!300$~GHz, antenna gains
$G_{\rm t}\!=\!20$~dBi, $G_{\rm r}\!=\!0$~dBi. The primary dielectric
waveguide is porous-core polyethylene (PE) with the verified
$\alpha_{\rm g}\!=\!0.43$~dB/m at $0.3$~THz~\cite[Table~5]{Ref_atakaramians2013thz}, representing the
best-realized PASS-compatible THz waveguide; for the multi-band Theorem-2
analysis we also include porous-core COC ($1.3$~dB/m at $0.22$~THz) and
microstructured Teflon bandgap ($2.2$~dB/m at $1$~THz). Atmospheric
attenuation is evaluated line-by-line per ITU-R~P.676 yielding
$\alpha_{\rm atm}\!=\!4$~dB/km at $300$~GHz. Three benchmark architectures
are compared: an UPA at corridor center with
$M\!\in\!\{64, 256\}$ half-wavelength elements (single RF chain via
analog beamforming); and the proposed SF-PASS
with $K\!=\!2$ segments, $x_f\!=\!L/2$, and switch insertion loss
$L_{\rm sw}\!=\!2$~dB. All schemes operate under the same sum-power
constraint $P_{\rm t}$ and assume perfect CSI.

\emph{Theorem validation.} Theorem~\ref{thm:coherence}'s closed-form
gives $\bar\rho\!\approx\!0.98$ at the primary scenario ($f_c\!=\!300$~GHz,
$B\!=\!10$~GHz, $N_{\rm act}\!=\!16$)---squint penalty $<\!0.1$~dB, benign.
Under Theorem~\ref{thm:xc_star}, at 300~GHz porous-PE the optimum
$x_{\rm c}^{\star}\!\approx\!39.4$~m (near-user; Regime~A), and the
$\alpha_{\rm g} x_{\rm u}\!\approx\!17$~dB in-WG penalty drops single-feed
PASS to $2.69$~bps/Hz (see Fig.~\ref{fig:se_vs_pt}). At higher-loss
operating points (e.g., 1~THz microstr.~Teflon, 2.2~dB/m) the optimum
shifts toward $x_{\rm c}^{\star}\!\to\!0$ as predicted by
Corollary~\ref{cor:inversion}; SF-PASS recovers near-user proximity in
both regimes.

Fig.~\ref{fig:se_vs_pt} reports SE vs.~$P_{\rm t}$ at the corridor end
($x_{\rm u}\!=\!40$~m, 300~GHz porous-PE, $\alpha_{\rm g}\!=\!0.43$~dB/m;
practical THz operation today lies in $[0,15]$~dBm given current
InP/SiGe power amplifier  technology~\cite{Ref_jiang2024thz6g}).
At $P_{\rm t}\!=\!10$~dBm, single-feed PASS achieves $2.69$~bps/Hz
(Regime~A: $x_{\rm c}^{\star}\!=\!39.4$~m, indistinguishable from
fixed-PASS at $x_{\rm c}\!=\!x_{\rm u}\!=\!40$~m, so the two curves
coincide; they diverge only in the higher-loss Regime~B of
Fig.~\ref{fig:sfpass_payoff}), degraded by
$\alpha_{\rm g}\,x_{\rm u}\!\approx\!17$~dB in-WG loss.
\emph{SF-PASS $K\!=\!2$ doubles this to $5.36$~bps/Hz} by reducing the
feed-to-cluster distance from $x_{\rm u}\!=\!40$~m (single-feed at $x\!=\!0$)
to $\|\boldsymbol{\psi}_{\rm c}^{\star,{\rm SF}}\!-\!\boldsymbol{\psi}_f\|
\!\approx\!15$~m (centrally-fed switch at $x_f\!=\!L/2$), giving in-WG loss
$\approx\!6.5$~dB plus $L_{\rm sw}\!=\!2$~dB switch loss, and \emph{essentially ties the
UPA $M\!=\!64$} at $5.43$~bps/Hz---within
$0.07$~bps/Hz at $1/4$ the active-antenna count and using passive
dielectric pinches rather than active power amplifiers. The higher-cost
antenna-parity UPA $M\!=\!256$ delivers $7.40$~bps/Hz at the price of
$256$ active antennas and $256$ phase shifters.

\begin{figure}[t]
\centering
\includegraphics[width=0.65\columnwidth]{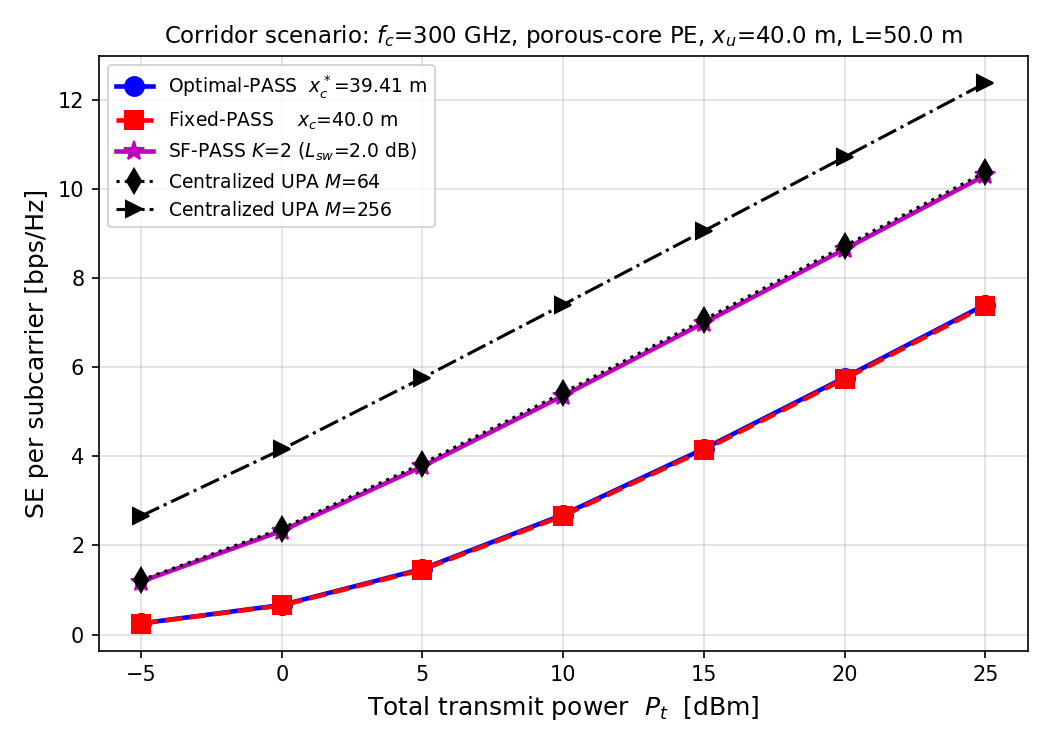}
\caption{Performance comparison: SE vs.~$P_{\rm t}$, where SF-PASS doubles
single-feed PASS and essentially ties the  UPA $M\!=\!64$.}
\label{fig:se_vs_pt}
\end{figure}

Fig.~\ref{fig:sfpass_payoff} sweeps $\alpha_{\rm g}$ over $[10^{-3},
10^{3}]$~dB/m at 300~GHz and confirms the payoff crossover at
$\alpha_{\rm g}^{(\rm sw)}\!\approx\!0.133$~dB/m: every PASS-compatible
waveguide reported in~\cite{Ref_atakaramians2013thz}
(300~GHz porous-PE $3.2\!\times$ above;
220~GHz porous-COC $9.8\!\times$ above; 1~THz microstr.~Teflon
$16.5\!\times$ above) operates in the SF-PASS-favored regime. The SF-PASS
SE gain over single-feed peaks near $\alpha_{\rm g}\!\sim\!0.3$~dB/m and
remains at $1.5$--$2.7$~bps/Hz across the realistic $[0.4, 50]$~dB/m
range.

\begin{figure}[t]
\centering
\includegraphics[width=0.65\columnwidth]{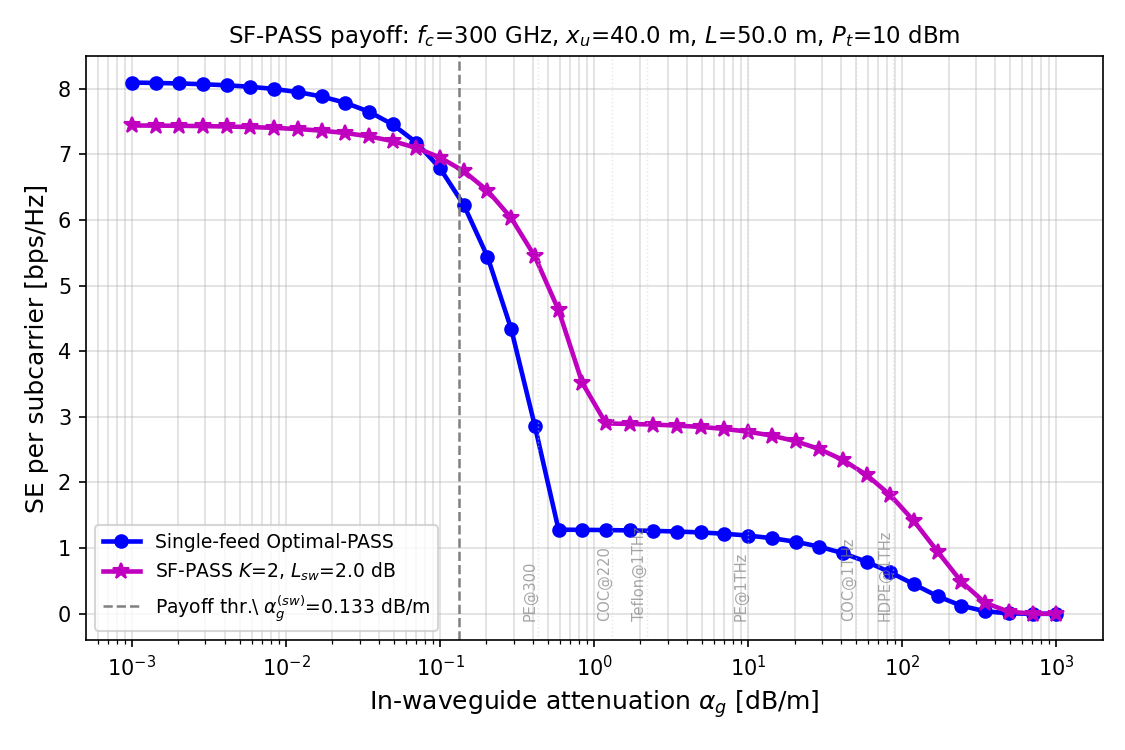}
\caption{SF-PASS payoff: SE vs.\ in-waveguide attenuation $\alpha_{\rm g}$.
Dashed vertical line marks the closed-form payoff threshold
$\alpha_{\rm g}^{(\rm sw)}\!\approx\!0.133$~dB/m.}
\label{fig:sfpass_payoff}
\end{figure}

SF-PASS's value lies in its hardware-cost class (Table~\ref{tab:cost}).
At THz, per-element cost is dominated by the active front-end---power
amplifier, high-speed DAC/ADC, mixer, LO chain---constrained by limited
InP/SiGe technology~\cite{Ref_jiang2024thz6g}. \emph{The UPA baseline
assumes the standard THz analog-beamforming architecture with one power
amplifier per antenna element placed after each phase shifter: per-amplifier output power is bounded
by InP/SiGe component capabilities at THz (5--10~dBm at 300~GHz), and the alternative
single-power-amplifier-plus-passive-splitter configuration incurs prohibitive
$10\log_{10}(M)\!\approx\!18$~dB splitter loss for $M\!=\!64$.}
Consequently, the UPA $M\!=\!64$ requires $64$ active THz power
amplifiers and $64$ phase shifters; UPA $M\!=\!256$ requires $256$ of
each. SF-PASS reduces the entire front-end to one RF chain $+$ one THz
power amplifier $+$ a passive dielectric waveguide bearing
$N_{\rm act}\!=\!16$ pinch elements. \emph{SF-PASS matches the SE of
UPA $M\!=\!64$} at 300~GHz porous-PE using $64\times$ fewer active power
amplifiers ($1$ vs.\ $64$), validating the single-RF-chain
single-power-amplifier THz-PASS architecture as a viable low-cost
alternative.

\begin{table}[t]
\centering
\caption{Hardware-Cost Comparison.}
\label{tab:cost}
\renewcommand{\arraystretch}{1.1}
\setlength{\tabcolsep}{4pt}
\footnotesize
\begin{tabular}{lcccc}
\hline
\textbf{Architecture} & \textbf{RF} & \textbf{Power Amplifier} & \textbf{Phase Shifter} & \textbf{Antennas} \\
\hline
UPA $M\!=\!64$          & 1 & 64  & 64  & 64 active \\
UPA $M\!=\!256$    & 1 & 256 & 256 & 256 active \\
\text{SF-PASS}       & \text{1} & \textbf{1} & \text{0} & \text{16 passive} \\
\hline
\end{tabular}
\end{table}

\section{Conclusion}\label{sec:conclusion}

We presented a unified wideband THz-PASS framework and Switched-Feed
PASS architecture with five closed-form results
(Theorems~\ref{thm:coherence}--\ref{thm:sfpass},
Corollaries~\ref{cor:inversion}--\ref{cor:sfpass}). At the best
PASS-compatible waveguide reported in~\cite{Ref_atakaramians2013thz}
(300~GHz porous-PE, $\alpha_{\rm g}\!=\!0.43$~dB/m), SF-PASS with two segments \emph{doubles}
single-feed PASS  (with spectral efficiency of $2.69\!\to\!5.36$~bps/Hz at the corridor end) by
halving the effective in-waveguide distance, and \emph{essentially ties}
the UPA with $64$ active elements ($5.43$~bps/Hz) using a single
RF chain feeding a single THz power amplifier and $N_{\rm act}\!=\!16$ passive
dielectric pinch elements---$64\times$ fewer active power amplifiers than the
 UPA. This positions SF-PASS as the architectural enabler of
single-RF-chain THz-PASS at corridor scale. Extensions include
numerical evaluation of more scenarios like the 2D-room geometry, multi-user
wideband PASS-THz modeling and optimization, mobility-adaptive placement, and integrated PASS-THz sensing and communications.

\bibliographystyle{IEEEtran}
\bibliography{IEEEabrv,Ref_THz_PASS}

% Generated by IEEEtran.bst, version: 1.14 (2015/08/26)
\begin{thebibliography}{10}
\providecommand{\url}[1]{#1}
\csname url@samestyle\endcsname
\providecommand{\newblock}{\relax}
\providecommand{\bibinfo}[2]{#2}
\providecommand{\BIBentrySTDinterwordspacing}{\spaceskip=0pt\relax}
\providecommand{\BIBentryALTinterwordstretchfactor}{4}
\providecommand{\BIBentryALTinterwordspacing}{\spaceskip=\fontdimen2\font plus
\BIBentryALTinterwordstretchfactor\fontdimen3\font minus \fontdimen4\font\relax}
\providecommand{\BIBforeignlanguage}[2]{{%
\expandafter\ifx\csname l@#1\endcsname\relax
\typeout{** WARNING: IEEEtran.bst: No hyphenation pattern has been}%
\typeout{** loaded for the language `#1'. Using the pattern for}%
\typeout{** the default language instead.}%
\else
\language=\csname l@#1\endcsname
\fi
#2}}
\providecommand{\BIBdecl}{\relax}
\BIBdecl

\bibitem{Ref_jiang2024thz6g}
W.~Jiang \emph{et~al.}, ``Terahertz communications and sensing for {6G} and beyond: A comprehensive review,'' \emph{{IEEE} Commun. Surveys Tuts.}, vol.~26, no.~4, pp. 2326--2381, Oct. 2024.

\bibitem{Ref_jornet2011channel}
J.~M. Jornet and I.~F. Akyildiz, ``Channel modeling and capacity analysis for electromagnetic wireless nanonetworks in the terahertz band,'' \emph{{IEEE} Trans. Wireless Commun.}, vol.~10, no.~10, pp. 3211--3221, Oct. 2011.

\bibitem{Ref_fukuda2022pinching}
A.~Fukuda, H.~Yamamoto, H.~Okazaki, Y.~Suzuki, and K.~Kawai, ``Pinching antenna --- using a dielectric waveguide as an antenna,'' \emph{{NTT DOCOMO} Tech. J.}, vol.~23, no.~3, pp. 5--12, Jan. 2022.

\bibitem{Ref_ding2025flexible}
Z.~Ding, R.~Schober, and H.~V. Poor, ``Flexible-antenna systems: A pinching-antenna perspective,'' \emph{{IEEE} Trans. Commun.}, vol.~73, no.~10, pp. 9236--9253, Oct. 2025.

\bibitem{Ref_liu2026passtutorial}
Y.~Liu \emph{et~al.}, ``Pinching-antenna systems ({PASS}): A tutorial,'' \emph{{IEEE} Trans. Commun.}, vol.~74, pp. 4881--4918, 2026.

\bibitem{Ref_wang2025modeling}
Z.~Wang, C.~Ouyang, X.~Mu, Y.~Liu, and Z.~Ding, ``Modeling and beamforming optimization for pinching-antenna systems,'' \emph{{IEEE} Trans. Commun.}, vol.~73, no.~12, pp. 13\,904--13\,919, Dec. 2025.

\bibitem{Ref_xu2025inwgblockage}
Y.~Xu, Z.~Ding, O.~A. Dobre, and T.-H. Chang, ``Pinching-antenna system design with {LoS} blockage: Does in-waveguide attenuation matter?'' \emph{arXiv:2508.07131}, Aug. 2025.

\bibitem{Ref_atakaramians2013thz}
S.~Atakaramians, S.~{Afshar V.}, T.~M. Monro, and D.~Abbott, ``Terahertz dielectric waveguides,'' \emph{Adv. Opt. Photon.}, vol.~5, no.~2, pp. 169--215, Jul. 2013.

\bibitem{Ref_itur_p676_13}
{International Telecommunication Union}, \emph{Attenuation by Atmospheric Gases and Related Effects}, {ITU-R} Recommendation {P.676-13}, Aug. 2022.

\bibitem{Ref_tarboush2021teramimo}
S.~Tarboush \emph{et~al.}, ``{TeraMIMO}: A channel simulator for wideband ultra-massive {MIMO} terahertz communications,'' \emph{{IEEE} Trans. Veh. Technol.}, vol.~70, no.~12, pp. 12\,325--12\,341, Dec. 2021.

\bibitem{Ref_xiao2025ofdmpass}
J.~Xiao, J.~Wang, M.~Zeng, Y.~Liu, and G.~K. Karagiannidis, ``Frequency-selective modeling and analysis for {OFDM}-integrated wideband pinching-antenna systems,'' \emph{{IEEE} Wireless Commun. Lett.}, vol.~14, no.~11, pp. 3500 -- 3504, Nov. 2025.

\end{thebibliography}

\end{document}